\theoremstyle{definition}
\newtheorem{theorem}{Theorem}[section]
\newtheorem{lemma}[theorem]{Lemma}
\date{}
\title{A topological characterization of the existence of  $w$-stable  sets}
\author{
  Athanasios Andrikopoulos\thanks{Professor  (https://www.ceid.upatras.gr/webpages/faculty/aandriko/)} \\
  Dept. of Computer Engineering and Informatics\\
  University of Patras\\
  Patras, 26504, Greece \\
  \texttt{aandriko@ceid.upatras.gr} \\
\And
Nikolaos Sampanis\\
 Dept. of Computer Engineering and Informatics\\
 University of Patras\\
 Patras, 26504, Greece \\
 \texttt{nsampanis@upatras.gr} \\
}
\begin{document}.
\maketitle

\begin{abstract}
The theory of optimal choice sets is a solution theory that has a long and well-established tradition in 
social choice and game theories.
Some of important general solution concepts of choice problems when the set of best alternatives does 
not exist (this problem occurs when the preferences yielded by an economic process are cyclic) is the Stable 
Set (Von Neumann-Morgenstern set)
and its variants (Generalized Stable set, Extended Stable set, 
$m$-Stable set and $w$-Stable set). The theory of $w$-stable sets solution is more realistic because: (1)
It solves the existence problem of solution; (2) It expands the notions of maximal alternative set and (3) 
The concept of stability is defined in such a way as to prevent a chosen alternative from being dominated 
by another alternative and sets this stability within the solution.
In this paper, we present a topological characterization of the existence 
of $w$-Stable sets solution of arbitrary binary relations over non-finite sets of alternatives.
\end{abstract}

\keywords{Compactness, Upper tc-semicontinuity\and Von Neumann-Morgenstern Stable Set \and Generalized Stable Set 
\and $m$-Stable Set  \and $w$-Stable Set  \and Social Choice Theory }

\section{Introduction} 
The classical rationality conditions in choice theory formalize the thesis that to choose rationally is to choose in such a way 
that no other choice would have been better, or preferable. That is, each individual makes choices by selecting from each 
feasible set
of alternatives, those which maximize his own preference relation.
According to this hypothesis, the set of choices from a given set of alternatives in which a dominance relation 
is defined consists of the set of maximal elements with respect to this dominance relation, which is known as the {\it core}.
The ordered pair $(X,R)$, where $X$ is a (finite or infinite)
non-empty set of mutually exclusive alternatives 
and $R$ is a dominance relation over $X$ is called {\it abstract decision problem}. 
The set of maximal elements (core) of an abstract decision problem is often 
empty. In this case, it is important to specify criteria that will provide reasonable sets of alternatives as solutions. 
In the choice and game theories, a number of theories, called general solution theories, have been proposed to 
take over the role of maximality 
in the absence of maximal elements. 
Any solution that includes the set of maximal alternatives 
is called {\it core-inclusive}.
Because general solution theories generalize the notion of core, a logical requirement is that they be core-inclusive.
Some of the most important general solution concepts is the Schwartz set
which is equivalent 
to the admissible set by Kalai and Schmeidler or to the dynamic solutions concept of Shenoy \cite{she} in game theory.
The Schwartz set is not only non-empty for every finite abstract decision problem, but also core-inclusive. 
However, this solution has a disadvantage. It may include all the alternatives under consideration. That means that 
this solution may not discriminate alternatives at all.
A different approach to finding general solution concepts for solving choice problems, which is still being studied 
and improved today is the concept of stable set introduced by Von Neumann and Morgenstern \cite{von}.
Stable sets solution is core-inclusive and behaves well in acyclic dominance relations. However, 
the theory of stable sets has a significant flaw in that it can be empty in the case of odd cycles.
To avoid this particular problem, Van Deemen \cite{van} introduced the notion of the generalized stable set which 
in finite sets of alternatives is able to produce a solution for every possible cyclic dominance relation.
Andrikopoulos \cite{and} provides a topological characterization for the existence of the generalized stable set 
in infinite sets of alternatives.
Stable sets solution have an additional weakness, namely,
it is possible that an alternative in a stable set is dominated by an alternative outside this set. Because of this, Peris and
Subiza \cite{per} introduced a reformulation of stable sets called $m$-stable sets. They have shown that $m$-stable sets exist 
for every abstract decision problem, and that alternatives in an $m$-stable set are free from being dominated by any 
alternative outside this set. However, this solution has other disadvantages. First, the solution of the $m$-stable set may 
include all the alternatives under consideration and so that this solution does not discriminate between alternatives at all.
Second, an $m$-stable set $V$ may not be free from inner contradiction in the sense that an alternative $x$ in $V$ may be 
dominated by another alternative $y$ in the same set. This is a violation of what Von Neumann and Morgenstern 
called internal stability. 
Han and Van Deemen in \cite{han} propose an alternative solution called $w$-stable set which 
can accommodate both mentioned 
disadvantages of $m$-stable sets. They prove that $w$-stable sets exist and are proper subsets of the ground set for 
any abstract decision problem. Moreover, no alternative in a $w$-stable set is dominated by another alternative in this set. 
In other words, $w$-stable sets satisfy internal stability. In this respect, the notion of $w$-stable sets differs fundamentally 
from the concept of $m$-stable sets.

 In a characterization of the existence of non-empty choice sets in infinite sets of alternatives in which 
 an arbitrary dominance relation is defined, the usual approach is to assume topological notions such 
 as compactness and continuity. Building on this fact, in this paper we give two characterizations of the existence of
of the $w$-stable sets solution for arbitrary abstract decision problems. 
Beyond the concept of compactness that we use to prove the two basic existence theorems,
the other basic concepts we use are the notion of generalized upper tc-semicontinuity defined by Andrikopoulos 
in \cite{and1} and the notion of contraction relation which is defined in \cite{BCM}.

\section{Notations and definitions}
An abstract decision problem is divided into two parts. One is an arbitrary set $X$ of alternatives (called the {\it ground set}) 
from which an individual or group must select. In most cases, there are at least two alternatives to choose from. Otherwise, 
there is no need to make a decision. The other is a dominance relation over this set, which reflects preferences or evaluations 
for different alternatives.
Preferences or evaluations over $X$ are modelled by a binary relation $R$.
When representing abstract decision problems, the pair $(X,R)$ is used. We sometimes abbreviate $(x,y)\in R$ as $xRy$.
The {\it transitive closure} of $R$ is the relation $\overline{R}$ defined as follows:
For all $x, y\in X$, $(x,y)\in \overline{R}$ if and only if there exist $K\in \mathbb{N}$ and $x_{_0},...,x_{_K}\in X$ such that
$x=x_{_0}, (x_{_{k-1}},x_{_k})\in R$ for all $k\in \{1,...,K\}$ and $x_{_K}=y$.
A subset $D\subseteq X$ is $R$-{\it undominated} if and only if for no $x\in D$ there is a $y\in X\setminus D$ such that $yRx$. 
A subset $Y\subseteq X$ is an $R$-{\it cycle} if for all $x,y\in Y$, we have 
$(x,y)\in \overline{R}$ and $(y,x)\in \overline{R}$. A {\it Top $R$-cycle} is an $R$-cycle which 
which is maximal with respect to set-inclusion.
We say that $R$ is {\it acyclic} if there does not exist an R-cycle. 
An alternative $x\in X$ is $R$-{\it maximal} with respect to a binary relation $R$,
if $(y,x)\in P(R)$ for no $y\in X$.
The traditional choice-theoretic approach takes behavior as rational if there is a binary relation $R$ such that for 
each non-empty subset $A$ of $X$, $\mathcal{C}(A)=\mathcal{M}(A,R)$. To deal with the case where the set 
of maximal elements is empty, Schwartz in \cite[Page 142]{sch} has proposed the general solution concept known as
{\it Generalized Optimal-Choice Axiom} ($\mathcal{G}\mathcal{O}\mathcal{C}\mathcal{H}\mathcal{A}$): 
For each $A\subseteq X$, $\mathcal{C}(A)$ is equivalent to the union of all minimal 
$R$-undominated subsets of $A$. The Schwartz set is the choice set from a given set specified by the $\mathcal{G}\mathcal{O}\mathcal{C}\mathcal{H}\mathcal{A}$
condition.That is, for each $A\subseteq X$, $\mathcal{C}(A)=\displaystyle\bigcup_{D\in\mathcal{D}}$ where
$\mathcal{D}$ is the set of all minimal $R$-undominated subsets of $A$.
Deb in \cite{deb} shows that $\mathcal{C}(A)=\mathcal{M}(A,\overline{P(R)})$ (see also \cite{and3}).
According to \cite[Theorem 19]{and4}  $\mathcal{C}(A)$ is equivalent to the union of all $R$-undominated 
elements and all top $R$-cycles in $X$. It
 is also equivalent to the notion of admissible set in game theory defined by Kalai and Schmeidler in \cite{KS} and 
 the notion of dynamic solutions defined by Shenoy in \cite{she}.
A subset $F$ of $(X,R)$ is called a {\it Von Neumann-Morgenstern stable set}  if (i)
no alternative in $F$ is dominated with respect to $R$ by another alternative in
$F$, and (ii) any alternative outside $F$ is dominated with respect to $R$ by an alternative inside $F$.
The first
property is called {\it internal stability of domination} and the second property {\it external stability of domination}. 
 A subset $F$ of $X$ is called a {\it generalized stable set} if it is a Von Neumann-Morgenstern stable set of $X$ with 
 respect to the transitive closure of $R$.
 A set $F\subseteq X$ is called an $m$-{\it stable set} of $(X,R)$ if (i) $\forall x,y \in F$, if $x\overline{R}y$ then $y\overline{R}x$;
 (ii) $\forall x,y \in F$, there is no $y\in X\setminus F$ such that $y\overline{R}x$.
 A set $F\subseteq X$ is called an $w$-{\it stable set} of $(X,R)$ if
(i) $\forall x,y \in F$, $(x,y)\notin\overline{R}$;
(ii) $\forall x \in F$ and $y\in X\setminus F$, if $y\overline{R}x$ then $x\overline{R}y$.
The following definitions are taken from \cite{van}.
An abstract decision problem $(X,R)$ is called {\it strongly connected} if $x\overline{R}y$ for all $x, y\in X$. 
A {\it strong component} of an abstract decision problem $(X,R)$
is an abstract decision problem $(Y,R|_{_Y})$, $Y\subseteq X$, satisfying the following properties:
($\mathfrak{i}$) $(Y,R|_{_Y})$ is strongly connected;
($\mathfrak{i}\mathfrak{i}$) no abstract decision problem $(Y^{\prime},R|_{_{Y^{\prime}}})$ with $Y^{\prime}\supset Y$ 
is strongly connected.
Note that when an element $x$ is not on any cycle, it forms a singleton strongly connected component $\{x\}$ by itself.
Clearly, the set of strongly connected components form a partition of the space $(X,R)$.
The
{\it contraction} of $(X, R)$ is an abstract decision problem $(\Xi,\widetilde{R})$ where
\par
1. $\Xi=\{X_{_i}\vert i\in I\}$ is the collection of ground sets of the strong components of $(X,R)$;
\par
2. for any $X_{_i}, X_{_j}\in \Xi$, $X_{_i} \widetilde{R} X_{_j}$ if there are $x\in X_{_i}, y\in X_{_j}$ with 
$xRy$. Clearly, $\widetilde{R}$ is acyclic by definition.

In what follows, $\mu(\Xi,\widetilde{R})=\{X^{\ast}_i\vert i\in I\}$ denotes the family of ground sets which 
are $\widetilde{R}$-maximal in $\Xi$.

Let $R$ be a binary relation defined on a topological space $(X,\tau)$.
The relation $R$ is {\it upper} {\it semicontinuous} if for all $x\in X$ the set 
$xR=\{y\in X\vert xRy\}$ is open.
According to 
Alcantud and Rodriguez-Palmero \cite[page 181]{AR} a
binary relation $R$ 
on a topological space $(X,\tau)$
is {\it upper} {\it tc-semicontinuous} if
its transitive closure is upper semicontinuous; i.e., if for all
$x\in X$ the set $x\overline{R}=\{y\in X\vert x\overline{R}y\}$ is open.
The acronym $tc$ refers only to terms based on the initial letters of {\it transitive closure}.
The relation $R$ on $(X,\tau)$ is {\it generalized upper} {\it tc-semicontinuous}
if the set $xP(\overline{R})=\{y\in X\vert xP(\overline{R})y\}$ is open.
Upper semicontinuity implies upper tc-semicontinuity, and both concepts are equivalent for transitive binary relations.
We say that a topological space $(X,\tau)$  is {\it compact} if for 
each collection of open sets which covers $X$ there exists a finite subcollection that also covers $X$.
 
\section{The main result} \vspace{-0.2cm}
We proceed with the main result which is the topological 
 characterization of 
the existence of the $w$-stable set. In what follows, $\mathcal{W}\mathcal{S}(X)$ denotes the family of $w$-stable sets
of an abstract decision problem $(X,R)$.
\vspace{4mm}

\par

\begin{theorem}\label{a1} Let $(X, \tau)$ be a topological space and let $R$ be a binary relation defined in $X$. 
The following conditions are equivalent: 
\par
($\mathfrak{a}$) The $w$-stable set of $R$ in $X$ is non-empty.
\par
($\mathfrak{b}$) There exists a compact topology $\tau$ in $X$ such that $R$ is generalized upper tc-semicontinuous.
\end{theorem}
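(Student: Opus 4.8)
The plan is to reduce the statement to a single, classical fact — the existence of a maximal element for the strict partial order $P(\overline{R})$ — and then to read condition ($\mathfrak{b}$) as exactly the ``compactness plus upper semicontinuity yields a maximal element'' situation. First I would record that $P(\overline{R})$ is transitive and irreflexive (if $x\overline{R}y$ with $y$ not reaching $x$, and $y\overline{R}z$ with $z$ not reaching $y$, then $x\overline{R}z$ and $z$ cannot reach $x$; and $xP(\overline{R})x$ is impossible), hence acyclic. Then I would prove the reduction: \emph{$(X,R)$ admits a non-empty $w$-stable set if and only if $P(\overline{R})$ has a maximal element.} For the ``only if'' part, take a non-empty $F\in\mathcal{W}\mathcal{S}(X)$ and $x\in F$; if some $y$ satisfied $yP(\overline{R})x$, then $y\overline{R}x$ with $\neg(x\overline{R}y)$ would contradict (i) when $y\in F$ (here $x\neq y$ by irreflexivity) and contradict (ii) when $y\in X\setminus F$, so $x$ is $P(\overline{R})$-maximal. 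For the ``if'' part, if $x^{\ast}$ is $P(\overline{R})$-maximal then $\{x^{\ast}\}$ satisfies (i) vacuously and (ii) because $y\overline{R}x^{\ast}$ forces $x^{\ast}\overline{R}y$; thus $\{x^{\ast}\}$ is a non-empty $w$-stable set. This also matches the maximal elements of $P(\overline{R})$ with the members of the $\widetilde{R}$-maximal strong components in $\mu(\Xi,\widetilde{R})$.

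For the implication ($\mathfrak{b}$)$\Rightarrow$($\mathfrak{a}$) I would argue by compactness. Assume $\tau$ is compact and each $xP(\overline{R})$ is open. The set of non-maximal elements is $\bigcup_{x\in X}xP(\overline{R})$, a union of open sets. If $P(\overline{R})$ had no maximal element this union would be all of $X$, yielding an open cover $\{xP(\overline{R})\}_{x\in X}$; compactness gives $x_1,\dots,x_n$ with $X=\bigcup_{k=1}^{n}x_kP(\overline{R})$. Since $P(\overline{R})$ is acyclic, the finite set $\{x_1,\dots,x_n\}$ has a $P(\overline{R})$-maximal member $x_m$; but $x_m\in X$ forces $x_jP(\overline{R})x_m$ for some $j$, contradicting the maximality of $x_m$ within that finite set. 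Hence a maximal element exists, and by the reduction so does a non-empty $w$-stable set.

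For the converse ($\mathfrak{a}$)$\Rightarrow$($\mathfrak{b}$) I would construct a topology by hand. Let $M$ be the set of $P(\overline{R})$-maximal elements, which is non-empty by the reduction, and set $\tau=\{X\}\cup\{U\subseteq X\mid U\cap M=\emptyset\}$. This is a topology, because the subsets disjoint from $M$ are closed under arbitrary unions and finite intersections and adjoining $X$ preserves this. Each $xP(\overline{R})$ is disjoint from $M$ (a point of $M\cap xP(\overline{R})$ would be dominated by $x$, contradicting its maximality), so every $xP(\overline{R})$ is open and $R$ is generalized upper tc-semicontinuous. Compactness is immediate: the only open set meeting $M$ is $X$ itself, so any open cover must contain $X$ and therefore admits the one-element subcover $\{X\}$; it is precisely $M\neq\emptyset$ that forces this.

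I expect the main obstacle to be the reduction lemma rather than the topology. One must read condition (i) as ranging over \emph{distinct} pairs — otherwise a pure cycle would admit no non-empty $w$-stable set while trivially satisfying ($\mathfrak{b}$), since there $P(\overline{R})=\emptyset$ — and then verify that (i) and (ii) together collapse exactly to $P(\overline{R})$-maximality. Once the problem is recast as the existence of a maximal element of the acyclic relation $P(\overline{R})$, the forward implication is the standard finite-subcover argument and the converse is the explicit construction above, whose only delicate point is that compactness genuinely relies on $M\neq\emptyset$.
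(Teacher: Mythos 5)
Your proposal is correct and follows essentially the same route as the paper: the excluded-set topology (you exclude the set $M$ of $P(\overline{R})$-maximal points, the paper excludes the $w$-stable set $F$ itself) yields compactness and the openness of each $xP(\overline{R})$ for ($\mathfrak{a}$)$\Rightarrow$($\mathfrak{b}$), while the converse is the same finite-subcover argument, with your appeal to a maximal element of the finite poset $\{x_1,\dots,x_n\}$ being a cleaner rendering of the paper's cycle-construction. Your explicit reduction lemma equating non-emptiness of the $w$-stable set with the existence of a $P(\overline{R})$-maximal element, and the caveat that internal stability must be read over distinct pairs, only make explicit what the paper uses implicitly.
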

\begin{proof}($\mathfrak{a}$) $\Rightarrow$ ($\mathfrak{b}$) Suppose that $F\in \mathcal{W}\mathcal{S}(X)$
is non-empty. 
Let $\tau$ be the excluded set topology in $X$ generated by $F$ \cite[p. 48]{SS} (it has as open sets all those 
subsets of $X$ which are disjoint from $F$, together with $X$ itself). Then, $X$ is compact 
under $\tau$ since 
every open cover of $X$ includes $X$ itself. Hence, $\{X\}$ is always a finite subcover. It remains to prove that $R$ is 
generalized tc-upper semicontinuous. 
 In fact, we prove that for each $x\in X$ the sets $\{y\vert xP(\overline{R})y\}$ are open in $\tau$. 
 We have two cases to consider: when $x\in  F$ or not.
 If $x\in  F$ then because of internal stability we have that $\{y\in F\vert xP(\overline{R})y\}=\emptyset$. 
 If $x\notin  F$, then because of external stability we have  
 $\{y\in  F\vert xP(\overline{R})y\}=\emptyset$ as well. Indeed, suppose to the contrary that 
 $(x,y)\in P(\overline{R})$ for some $y\in F$.
 Then, by the external stability, since $(y,x)\notin \overline{R}$ we have that there exists $y^{\prime}\in F$ such that 
 $(y^{\prime},x)\in \overline{R},\ y\neq y^{\prime}$. It follows that $(y^{\prime},y)\in \overline{R}$, a contradiction to internal stability.
 Hence, $R$ is generalized upper tc-semicontinuous.
 \par\smallskip\par\noindent
 ($\mathfrak{b}$) $\Rightarrow$ ($\mathfrak{a}$) 
 Suppose that $\tau$ is compact in $X$ and that $R$ is generalized upper tc-semicontinuous. 
 We prove that the $w$-stable set of $R$ in $X$ is non-empty. To show this, 
 suppose to the contrary that for each $x\in X$ there exists $y\in X$ such that $(y,x)\in \overline{R}$ implies that
 $(x,y)\notin\overline{R}$ or equivalently $(y,x)\in P(\overline{R})$. Therefore, no $\{x\}\in \mathcal{P}(X)$ can be a $w$-stable set (does not satisfy external stability). 
 Therefore,
for every $x\in X$ there exists $y\in X$ such that $yP(\overline{R})x$.
 Thus,
\begin{center}
$X=\displaystyle\bigcup_{y\in X}\{x\in X\vert yP(\overline{R})x\}$
\end{center}
Since the space is compact, there exist $\{y_1,y_2,...,y_n\}$ such that
\begin{center}
$X=\displaystyle\bigcup_{y\in \{y_1,y_2,...,y_n\}}\{x\in X\vert y_iP(\overline{R})x\in X\}$.
\end{center}
Consider the finite set $\{y_1,y_2,...,y_n\}$. Then, for each 
\begin{center}
$x\in X=\displaystyle\bigcup_{y\in \{y_1,y_2,...,y_n\}}\{x\in X\vert y_iP(\overline{R})x\in X\}$
\end{center}
there exists $i=\{1,2,...,n\}$ such that $y_iP(\overline{R})x$.
Since $y_1\in X$, it follows that $y_iP(\overline{R})y_1$ for some $i\in \{1,2,...,n\}$.
If $i=1$, then we have a contradiction. Otherwise, call this element $y_2$.
We have $y_2P(\overline{R}) y_1$.
Similarly, $y_3P(\overline{R})y_2P(\overline{R}) y_1$. As $\{y_1,y_2,...,y_n\}$
is finite, by an induction argument based on this logic, we obtain the existence of a $P(\overline{R})$-cycle 
which is impossible.
 This last 
contradiction shows 
that there exists a
 $x_{0}\in X$ 
 such that for each $y\in X$ we have $(y,x_0)\notin P(\overline{R})$. Then, $(y,x_0)\in \overline{R}$ for some $y\in X$
 implies that
 $(x_0,y)\in\overline{R}$. It follows that $\{x_0\}\subseteq F$ with $F\in \mathcal{W}\mathcal{S}(X)$.
\end{proof}

\begin{lemma}\label{a321}{\rm Let $(X,R)$ be an abstract decision problem, and let $\tau$ be a compact topology in $X$.
Suppose that 
$R$ is upper $tc$-semicontinuous. 
Then, the $w$-stable set is non-empty.
}
\end{lemma}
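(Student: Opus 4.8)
The plan is to argue by contradiction, mirroring the structure of the $(\mathfrak{b}) \Rightarrow (\mathfrak{a})$ part of Theorem \ref{a1}, but replacing the $P(\overline{R})$-cycle argument (which required \emph{generalized} upper tc-semicontinuity) by an argument routed through the contraction $(\Xi,\widetilde{R})$. This detour is available precisely because $\widetilde{R}$ is acyclic, so I can extract a maximal strong component and build a $w$-stable singleton from it. Concretely, I would assume $\mathcal{W}\mathcal{S}(X)=\emptyset$ and first reduce this to a domination statement that feeds a compactness argument.

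First I would observe that if no $w$-stable set exists then, in particular, no singleton $\{x\}$ is $w$-stable. Since internal stability is vacuous for a singleton, $\{x\}$ can only fail external stability, so for every $x\in X$ there is some $y$ with $y\overline{R}x$ and $\neg(x\overline{R}y)$ (i.e.\ $yP(\overline{R})x$); in particular every $x$ is $\overline{R}$-dominated. Hence $X=\bigcup_{y\in X}\{x\in X : y\overline{R}x\}=\bigcup_{y\in X} y\overline{R}$. This is exactly where upper tc-semicontinuity enters: each set $y\overline{R}$ is open, so this is an open cover, and compactness of $\tau$ yields finitely many $y_{1},\dots,y_{n}$ such that every element of $X$ is $\overline{R}$-dominated by some $y_{i}$. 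Note that for the cover I only use the non-strict part $y\overline{R}x$, which is all that upper tc-semicontinuity controls.

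The heart of the argument is then to convert this finite dominating set into a source of the contraction, i.e.\ to show $\mu(\Xi,\widetilde{R})\neq\emptyset$. Passing to $(\Xi,\widetilde{R})$, let $C_{1},\dots,C_{m}$ be the finitely many strong components containing $y_{1},\dots,y_{n}$. Since every element is $\overline{R}$-dominated by some $y_{i}$, every component is reachable from one of $C_{1},\dots,C_{m}$ via $\overline{\widetilde{R}}$. Because $\widetilde{R}$ is acyclic, $\overline{\widetilde{R}}$ is a strict partial order, so the finite set $\{C_{1},\dots,C_{m}\}$ has a $\overline{\widetilde{R}}$-maximal member, say $C_{1}$. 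I would then show $C_{1}$ is globally maximal, i.e.\ $C_{1}\in\mu(\Xi,\widetilde{R})$: if some component $D$ satisfied $D\,\widetilde{R}\,C_{1}$ with $D\neq C_{1}$, then $D$ would itself be reachable from some $C_{k}$, yielding $C_{k}\,\overline{\widetilde{R}}\,C_{1}$ and hence, by acyclicity together with the $\overline{\widetilde{R}}$-maximality of $C_{1}$ inside the finite set, a contradiction.

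Finally, choosing any $x_{0}\in C_{1}$, I would verify that $\{x_{0}\}$ is a $w$-stable set: internal stability is vacuous, and for external stability any $y$ with $y\overline{R}x_{0}$ must lie in $C_{1}$ (otherwise $y$'s component would dominate the source $C_{1}$), whence $x_{0}\overline{R}y$. This contradicts $\mathcal{W}\mathcal{S}(X)=\emptyset$. The main obstacle, and the essential difference from Theorem \ref{a1}, is precisely the third step: upper tc-semicontinuity makes only the non-strict upper sections $y\overline{R}$ open, not the strict sections $yP(\overline{R})$, so the direct ``cycle in $P(\overline{R})$'' contradiction is unavailable; one must instead push the compactness argument through the acyclic contraction to manufacture a $\widetilde{R}$-maximal strong component, which is the real work of the proof.
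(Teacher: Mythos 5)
Your argument is correct, but it takes a genuinely different route from the paper's. The paper also starts from the compactness step that produces a finite dominating set $\{y_1,\dots,y_n\}$, but it then chases dominations inside that finite set (as in Theorem \ref{a1}) to exhibit an $R$-cycle among these elements, invokes Zorn's lemma to extend it to a maximal $R$-cycle $\mathcal{C}_{\beta_0}$ containing $\{y_1,\dots,y_n\}$, and checks that any singleton drawn from $\mathcal{C}_{\beta_0}$ is $w$-stable; the case where some $x$ is $\overline{R}$-maximal is disposed of separately at the outset. You instead push the finite dominating set through the contraction $(\Xi,\widetilde{R})$: acyclicity of $\widetilde{R}$ makes $\overline{\widetilde{R}}$ a strict partial order, so the finitely many components $C_1,\dots,C_m$ of the $y_i$ have a maximal member, which reachability promotes to a member of $\mu(\Xi,\widetilde{R})$. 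Your route buys three things: it replaces Zorn's lemma by the elementary existence of a maximal element in a finite strict partial order; it absorbs the $\overline{R}$-maximal case into the general argument (such a point is just a singleton source component); and it sidesteps the paper's delicate assertion that \emph{all} of $y_1,\dots,y_n$ lie on a single $R$-cycle --- the chain argument of Theorem \ref{a1} only yields a cycle through a subset of them, yet the paper's external-stability step uses $y_{i^{\ast}}\in\mathcal{C}_{\beta_0}$ for an arbitrary index $i^{\ast}$. In exchange, your proof leans on the contraction machinery (and implicitly on Lemma \ref{a221}-style reasoning), whereas the paper's construction stays entirely inside $(X,\overline{R})$. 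One small point to make explicit when writing it up: in your promotion step, ``reachable'' must be read as ``equal to or $\overline{\widetilde{R}}$-below'', and the subcase $C_k=C_1$ of the final contradiction is ruled out by irreflexivity of $\overline{\widetilde{R}}$ rather than by maximality within the finite set; both subcases are indeed covered by what you wrote.
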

\begin{proof} Let $x\in X$. If $x$ is an $\overline{R}$-maximal element, then $y\overline{R}x$ implies 
$x\overline{R}y$. Hence, $\{x\}$ is a $w$-stable set.
Otherwise, if $\mathcal{M}({\overline{R}})=\emptyset$ we have that for each $x\in X$ there exists $y\in X$ such that
$(y,x)\in \overline{R}$. Then, as in Theorem \ref{a1} above, we obtain the existence of 
finite set $\{y_1,y_2,...,y_n\}$ which belong to an $R$-cycle $\mathcal{C}_{_\beta}$ such that 
\begin{center}
$X=\displaystyle\bigcup_{y\in \{y_1,y_2,...,y_n\}}\{x\in X\vert y_i\overline{R}x\in X\}$.
\end{center}
By the Lemma of Zorn,\footnote{Zorn's lemma states that every partially ordered set for which 
every chain (that is, every totally ordered subset) has an upper bound contains at least one maximal element.}
the family of all $R$-cycles $(\mathcal{C}_{_\beta})_{_{\beta\in B}}$,
$\mathcal{C}_{_\beta}\subseteq X$ which contain the set $\{y_1,y_2,...,y_n\}$, has a maximal element, 
let $\mathcal{C}_{_{\beta_{_0}}}$. 
Let $x_0\in \mathcal{C}_{_{\beta_{_0}}}$. Then, $\{x_{_0}\}$ is a $w$-stable set. Indded, 
the internal stability of $w$-stable set is evident. 
To prove the external stability of the $w$-stable we suppose that
$y\overline{R}x_{_0}$ holds for some $y\in X$. If 
$y\in\mathcal{C}_{_{\beta_{_0}}}$, then
$x_{_0}\overline{R}y$. 
If $y\overline{R}x_{_0}$ for some $y\in X\setminus \mathcal{C}_{_{\beta_{_0}}}$, 
 then from 
 $(y_{i^{\ast}},y)\in \overline{R}$ for some $i^{\ast}\in \{1,2,...,n\}$ and $(y,y_i)\in \overline{R}$
for each $i\in \{1,2,...,n\}$, we conclude that $y_{i^{\ast}}\overline{R}y\overline{R}y_{i^{\ast}}$ or
$y\in \mathcal{C}_{_{\beta_{_0}}}$ which is impossible. Therefore, $\{x_{_0}\}$ satisfies the external stability of $w$-stable set.
Hence, $\{x_{_0}\}$ is a $w$-stable set.
\end{proof}

\begin{lemma}\label{a221}{\rm Let $(X,R)$ be an abstract decision problem, and let $\tau$ be a compact topology in $X$.
Suppose that 
$R$ is upper $tc$-semicontinuous. 
Then, the family $\mu(\Xi,\widetilde{R})$ of ground sets which 
are $\widetilde{R}$-maximal in $\Xi$ is non-empty.
}
\end{lemma}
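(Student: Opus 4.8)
The plan is to derive this statement from Lemma \ref{a321} rather than rerun the compactness argument from scratch. By Lemma \ref{a321}, the present hypotheses (compactness of $\tau$ together with upper $tc$-semicontinuity of $R$) already guarantee a non-empty $w$-stable set, and the proof of that lemma in fact produces a \emph{singleton} $w$-stable set $\{x_0\}$. I would fix such an $x_0$ and let $X_{i_0}\in\Xi$ denote the strong component of $(X,R)$ containing $x_0$. The goal is then to show that $X_{i_0}$ is $\widetilde R$-maximal, so that $X_{i_0}\in\mu(\Xi,\widetilde R)$ and the family is non-empty.

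First I would record the translation between the contraction $\widetilde R$ and the relation $\overline R$ on the ground set. Since $\widetilde R$ is acyclic it is asymmetric, so $X_{i_0}$ fails to be $\widetilde R$-maximal exactly when some component $X_j\neq X_{i_0}$ satisfies $X_j\widetilde R X_{i_0}$, i.e. when there exist $w\in X_j$ and $z\in X_{i_0}$ with $wRz$; note that then $w\notin X_{i_0}$. To rule this out I would argue as follows. From $wRz$ we get $w\overline R z$, and since $z$ and $x_0$ lie in the same strong component we have $z\overline R x_0$, whence $w\overline R x_0$. External stability of the $w$-stable set $\{x_0\}$ then yields $x_0\overline R w$, and combining $z\overline R x_0$ with $x_0\overline R w$ gives $z\overline R w$. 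But $w\overline R z$ and $z\overline R w$ together place $w$ in the strong component $X_{i_0}$, contradicting $w\notin X_{i_0}$. Hence no such $X_j$ exists and $X_{i_0}$ is $\widetilde R$-maximal.

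The only genuinely delicate point is the bookkeeping in this last step: one must verify that $\widetilde R$-domination between two \emph{distinct} strong components is always witnessed by a pair $w\notin X_{i_0}$, $z\in X_{i_0}$ with $wRz$, and one must be careful to apply external stability to the single chosen element $x_0$ (not to an arbitrary member of $X_{i_0}$), using the strong connectedness of $X_{i_0}$ to transport the conclusion from $x_0$ back to $z$. Everything else is routine: compactness and upper $tc$-semicontinuity enter only through the appeal to Lemma \ref{a321}, which is precisely what prevents the possibly infinite acyclic contraction $(\Xi,\widetilde R)$ from lacking maximal elements. A self-contained alternative would repeat the Zorn's-lemma-and-finite-subcover construction from the proof of Lemma \ref{a321} to extract a maximal $R$-cycle and identify it with the ground set of a $\widetilde R$-maximal strong component, but invoking Lemma \ref{a321} directly avoids duplicating that argument.
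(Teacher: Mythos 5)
Your proof is correct, but it takes a genuinely different route from the paper's. The paper proves Lemma \ref{a221} directly and self-containedly: starting from an arbitrary $x$ it forms $A_{x}=\{y\in X\mid \emptyset\subset\overline{R}y\subseteq \overline{R}x\}$, shows $A_{x}$ is closed (hence compact) via upper $tc$-semicontinuity, extracts a finite subcover of $A_{x}$ by sets of the form $\{t\mid y\overline{R}t\}$, locates an $R$-cycle among the finitely many covering centers, enlarges it to a maximal $R$-cycle by Zorn's lemma, and verifies that this maximal cycle is an $\widetilde{R}$-maximal member of $\Xi$. You instead treat Lemma \ref{a321} as a black box: its singleton $w$-stable set $\{x_0\}$ determines a strong component $X_{i_0}$, and external stability together with the fact that the strong components partition $X$ forces $X_{i_0}$ to be $\widetilde{R}$-maximal, since any $w\in X_j\neq X_{i_0}$ with $wRz$, $z\in X_{i_0}$, would become mutually $\overline{R}$-related to $z$ and hence absorbed into $X_{i_0}$. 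This is essentially the argument the paper itself uses later for the converse direction of Theorem \ref{a121}, so your route amounts to deducing Lemma \ref{a221} from Lemma \ref{a321} plus that half of Theorem \ref{a121}; it is shorter and avoids duplicating the compactness-plus-Zorn machinery, at the cost of not being self-contained (compactness and semicontinuity enter only through the citation). Your bookkeeping is sound, and as you implicitly note, even the singleton form of the stable set is inessential: for any $w$-stable $W\ni x_0$, internal stability excludes $w\in W$, so external stability still applies to $w$.
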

\begin{proof} Let $x\in X$.
If $x$ is an $\overline{R}$-maximal element, then $\{x\}$ belongs to a top $R$-cycle (Schwartz set).
Hence, $\{x\}\in \mu(\Xi,\widetilde{R})$.
Otherwise, there exists $y\in X$ such that $y\overline{R}x$.
Similarly if $y$ is an $\overline{R}$-maximal element, then $\{y\}\in \mu(\Xi,\widetilde{R})$.
Otherwise, there exists $y_1\in X$ such that $y_1\overline{R}y\overline{R}x$.
Put
\begin{center}
$A_{_{x}}=\{y\in X\vert \emptyset\subset\overline{R}y\subseteq \overline{R}x\}$.
\end{center}
Since $y_1\overline{R}y\overline{R}x$
we conlcude that $A_{_{x}}\neq \emptyset$.

We now show that $A_{_{x}}$ is closed with respect to $\tau$. 
Suppose that $t$ belongs to the closure of
$ A_{_{x}}$. Then, there exists a net $(t_{_k})_{_{k\in K}}$ in
$A_{_{x}}$ with $t_{_k}\to t$. We have to show that $t\in A_{_{x}}$, i.e., $\overline{R}t\subseteq \overline{R}x$.
Take any $z\in\overline{R}t$. Since $\{w\in X\vert z\overline{R}w\}$ is an open neighborhood of $t$, there exists 
$k^{\prime}\in K$ such that for each $k\geq k^{\prime}$, $z\overline{R}t_{_k}$ holds.
On the other hand, for each $k\geq k^{\prime}$, $t_{_k}\in A_{_{x^{\ast}}}$. Hence, $z\in \overline{R}t_{_k}\subseteq \overline{R}x$.
It follows that $\overline{R}t\subseteq \overline{R}x$ which implies that $t\in A_{_{x}}$.
Therefore, $A_{_{x}}$ is a closed subset of $X$.

If there exists $t^{\ast}\in A_{_{x}}$ which is $\overline{R}$-maximal in $X$, then $t^{\ast}$ belongs to a top $R$-cycle and thus
$\mu(\Xi,\widetilde{R})$ is non-empty. Otherwise,
for each $t\in A_{_{x}}$ there exists $y\in X$ such that
$(y,t)\in \overline{R}$. It follows that $y\in A_{_{x^{\ast}}}$ ($\overline{R}y\subseteq \overline{R}x^{\ast}$). 
Therefore, for each $t\in A_{_{x}}$, the sets $\{y\in X\vert \ y\overline{R}t\}\cap A_{_{x}}$
are open neighbourhoods of $t$ in the relative topology of $A_{_{x}}$, due to upper $tc$-semicontinuity of $R$.
Thus,
the collection $(\{t\in X\vert \ y\overline{R}t\}\cap A_{_{x}})_{_{y\in A_{_{x}}}}$ is an 
open cover of $A_{_{x}}$, that is,
\begin{center}
$A_{_{x}}=\displaystyle\bigcup_{y\in A_{_{x}}}(\{t\in X\vert \ y\overline{R}t\}\cap A_{_{x}})$.
\end{center}
Since $A_{_{x}}$ is compact
there exist $y_{_1},y_{_2},...,y_{_n}\in X$ such that
\begin{center}
$A_{_{x}}=\displaystyle\bigcup_{i=1,2,...,n}
(\{t\in X\vert \ y_{_i}\overline{R}t\}\cap A_{_{x}})$.
\end{center}
We show that among the elements $y_{_1},y_{_2},...,y_{_n}$ there must be an $R$-cycle.
First note that if $i^{\ast}\in \{1,2,...,n\}$, then $y_{_{i^{\ast}}}$ is an element of one of the covering sets
$\{t\in X\vert \ y_{_i}\overline{R}t\}\cap A_{_{x}}$, $i=1,2,...n$.
If $y_{_{i^{\ast}}}\in \{t\in X\vert \ y_{_{i^{\ast}}}\overline{R}t\}\cap A_{_{x}}$, then
we would have an $R$-cycle. Otherwise, for each $i, j\in \{1,2,...,n\}$, $i\neq j$, 
$y_{_i}\in \{t\in X\vert \ y_{_j}\overline{R}t\}\cap A_{_{x}}$. 
Without loss of generality, we assume that
$y_{_1}\in \{t\in X\vert \ y_{_{2}}\overline{R}t\}\cap A_{_{x}}$.
Now, for an arbitrary $i$, we have just the case as we did for
$i=1$, that 
$y_{_i}\in \{t\in X\vert \ y_{_{i+1}}\overline{R}t\}\cap A_{_{x}}$.
Then, $y_{_n}\in \{t\in X\vert \ y_{_{k}}\overline{R}t\}\cap A_{_{x}}$
with $k\in\{1,2,...,n\}$. Thus, we would have an $R$-cycle.

Therefore, there exists an $R$-cycle $\widetilde{\mathcal{C}}$
which contain the elements of a set $M=\{y_{_1},y_{_2},...,y_{_n}\}$.
By the Lemma of Zorn, the family of all $R$-cycles $(\widetilde{\mathcal{C}}_{_\gamma})_{_{\gamma\in \Gamma}}$,
$\widetilde{\mathcal{C}}_{_\gamma}\subseteq 
A_x$, which contain $M$ has a maximal element, which we'll call 
$\widetilde{\mathcal{C}}_{_{\gamma_{_0}}}$.
Therefore, 
by the Lemma of Zorn,
the family of all $R$-cycles $(\mathcal{C}_{_\gamma})_{_{\gamma\in \Gamma}}$,
$\mathcal{C}_{_\gamma}\subseteq A_x$, which contains $M$ has a maximal element, 
let $\mathcal{C}_{_{\gamma_{_0}}}$. Clearly, $\mathcal{C}_{_{\gamma_{_0}}}\in \Xi$.
We prove that $\mathcal{C}_{_{\gamma_{_0}}}\in \mu(\Xi,\widetilde{R})$. 
Indeed, let $X^{\ast} \widetilde{R} \mathcal{C}_{_{\gamma_{_0}}}$ for some $X^{\ast}\in \Xi$.
Then, there are $t\in X^{\ast}, s\in \mathcal{C}_{_{\gamma_{_0}}}$ such that 
$tRs$. If for each $\lambda \in X$ we have $(\lambda,t)\notin \overline{R}$, then
$X^{\ast}=\{t\}$ belongs to the Schwartz set and thus
$X^{\ast}\in\mu(\Xi,\widetilde{R})$.
Otherwise, there exist $\lambda^{\ast}\in X^{\ast}$ such that $(\lambda^{\ast},t)\in\overline{R}$.
Therefore, from $(\lambda,t)\in\overline{R}$, $(t,s)\in R$ and $(s,x)\in\overline{R}$ we conclude that 
$\emptyset\subset\overline{R}t\subseteq \overline{R}x$
which implies that
$t\in A_x$. Since $s\in \mathcal{C}_{_{\gamma_{_0}}}$ we have that $(t,y_i)\in \overline{R}$ for each 
$i\in \{1,2,...,n\}$. On the other hand, since $t\in A_x$ we have $(y_{i^{\ast}},t)\in \overline{R}$ for some $i^{\ast}\in \{1,2,...,n\}$.
Therefore, from $(t,y_{i^{\ast}})\in \overline{R}$ and $(y_{i^{\ast}},t)\in \overline{R}$ we conclude that 
$t\in \mathcal{C}_{_{\gamma_{_0}}}$ which is impossible. Hence, 
$\mathcal{C}_{_{\gamma_{_0}}}\in \mu(\Xi,\widetilde{R})$. Therefore, in any case we have that $\mu(\Xi,\widetilde{R})\neq\emptyset$.
\end{proof}

We now characterize the existence 
of the $w$-stable sets solution for arbitrary binary relations $R$ over non-empty sets of alternatives $X$, via the
contraction relation $(\Xi,\widetilde{R})$ of $(X,R)$.

\begin{theorem}\label{a121}{\rm 
Let $(X,R)$ be an abstract decision problem, and let $\tau$ be a compact topology in $X$.
Suppose that 
$R$ is upper $tc$-semicontinuous.
 Then, 
$W$ is a $w$-stable set of $(X,R)$ if and only if 
$W\subseteq \{x_j\vert x_j\ {\rm exactly\ one\ alternative\ of}\ X^{\ast}_j\in \mu(\Xi,\widetilde{R}), J\subseteq I\}$.}
\end{theorem}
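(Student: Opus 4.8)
The plan is to reduce the characterization to the combinatorics of the contraction $(\Xi,\widetilde{R})$, exploiting that $\widetilde{R}$ is acyclic and that every strong component is $\overline{R}$-complete (any two of its elements are mutually $\overline{R}$-related). The guiding observation is twofold: first, internal stability forbids $W$ from containing two distinct elements of the same component, since any $x,y$ in a common strong component satisfy $x\overline{R}y$; second, external stability together with the acyclicity of $\widetilde{R}$ forces every element of $W$ to lie inside a $\widetilde{R}$-maximal component. The central technical device throughout is a \emph{projection} step: an $\overline{R}$-chain $x=x_{0}Rx_{1}R\cdots Rx_{K}=y$ in $X$ induces a walk in $(\Xi,\widetilde{R})$ joining the component of $x$ to the component of $y$, so that $x\overline{R}y$ with $x,y$ in distinct components yields a nontrivial $\widetilde{R}$-path between those components.

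For the direction $(\Leftarrow)$ I would start from a set $W$ meeting each member of $\mu(\Xi,\widetilde{R})$ in at most one point and disjoint from every non-maximal component, and verify the two defining clauses of a $w$-stable set. For internal stability, take distinct $x,y\in W$; they lie in distinct maximal components $X^{\ast}_{i},X^{\ast}_{j}$, and if $x\overline{R}y$ held, the projection step would produce a $\widetilde{R}$-path into $X^{\ast}_{j}$, contradicting the $\widetilde{R}$-maximality of $X^{\ast}_{j}$. For external stability, take $x\in W$, say $x\in X^{\ast}_{i}$, and $y\in X\setminus W$ with $y\overline{R}x$; the projection of this chain is a $\widetilde{R}$-path ending in $X^{\ast}_{i}$, so by maximality it must be trivial, forcing $y\in X^{\ast}_{i}$, and strong connectedness of $X^{\ast}_{i}$ then gives $x\overline{R}y$, as required.

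For the direction $(\Rightarrow)$ I would take an arbitrary $w$-stable set $W$ and show it is contained in such a transversal. Internal stability immediately gives at most one element of $W$ per component, so it remains to rule out elements of $W$ lying in a non-maximal component. Assuming $x\in W\cap X_{i}$ with $X_{i}$ not $\widetilde{R}$-maximal, there are a component $X_{k}$ and elements $y'\in X_{k}$, $x'\in X_{i}$ with $y'Rx'$, whence $y'\overline{R}x$ by strong connectedness of $X_{i}$. If $y'\in W$ this violates internal stability; if $y'\notin W$, external stability yields $x\overline{R}y'$, whose projection gives a $\widetilde{R}$-path from $X_{i}$ to $X_{k}$, contradicting $X_{k}\widetilde{R}X_{i}$ and the acyclicity of $\widetilde{R}$. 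Either way we reach a contradiction, so every element of $W$ lies in a maximal component, which is exactly the claimed inclusion.

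The main obstacle, and the step I would write out most carefully, is the projection argument and its interaction with maximality: one must check that maximality of a component in $(\Xi,\widetilde{R})$ excludes not only direct incoming edges but incoming $\widetilde{R}$-paths of every length, which follows from acyclicity, since a path of length $\geq 2$ into a component would furnish a shorter incoming edge. I note that compactness and upper $tc$-semicontinuity are not needed for the combinatorial equivalence itself; they enter only through Lemma \ref{a221}, which guarantees $\mu(\Xi,\widetilde{R})\neq\emptyset$ and hence that the right-hand family is non-degenerate and that non-empty $w$-stable sets exist.
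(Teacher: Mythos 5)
Your proposal is correct and follows essentially the same route as the paper: both directions rest on projecting $\overline{R}$-chains in $X$ to $\widetilde{R}$-paths in the contraction $(\Xi,\widetilde{R})$ and playing acyclicity of $\widetilde{R}$ against $\widetilde{R}$-maximality. The one genuine difference is in the converse: the paper invokes Zorn's lemma to build a maximal $R$-cycle containing $\{x_{0},y\}$ (where $y\overline{R}x_{0}$ and, by external stability, $x_{0}\overline{R}y$) and then shows that cycle lies in $\mu(\Xi,\widetilde{R})$, whereas you argue directly that the strong component of $x$ cannot fail to be maximal, splitting on whether the dominating element $y'$ lies in $W$; your version is shorter, avoids the (inessential) appeal to Zorn, and is in fact slightly more careful, since the paper applies external stability to the element $y$ without first checking that $y\notin W$. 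Your closing observation is also accurate: compactness and upper tc-semicontinuity play no role in the stated equivalence itself --- they enter only through Lemmas \ref{a321} and \ref{a221}, which guarantee that $w$-stable sets and $\mu(\Xi,\widetilde{R})$ are non-empty.
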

\begin{proof}
Let $(X,R)$ denote an abstract decision problem satisfying the assumptions of the theorem.
Let $W^{\ast}=\{X^{\ast}_j\vert j\in J\}\subseteq \{X^{\ast}_i\vert i\in I\}$ be a subfamily of all ground sets which are $\widetilde{R}$-maximal 
in $\Xi$ and let 
$x_j\in X^{\ast}_j$ for each $j\in J$. This family is non-empty because of Lemma \ref{a221}. 
We prove that $W=\{x_j\vert j\in J\}$ is a 
$w$-stable set. If $(X,R)$ is strongly connected, then $X_i^{\ast}=X$ for all $i\in I$.
In this case, for each $x\in X$, $\{x\}$ is a $w$-stable set of $(X,R)$.
Suppose that
$X^{\ast}_{_{j^{\prime}}}\neq X^{\ast}_{_{j^{\prime\prime}}}$ for at least one pair 
$(j^{\prime},j^{\prime\prime})\in J\times J$.
 We first prove that $W$ satisfies internal stability for $w$-set.
Indeed, let $x_{_{j^{\prime}}}, x_{_{j^{\prime\prime}}}\in W$. We suppose, by way of contradiction, 
that $(x_{_{j^{\prime}}}, x_{_{j^{\prime\prime}}})\in \overline{R}$.
Then, there exists $N\in \mathbb{N}$ and
$x_{_1},x_{_2},...,x_{_N}\in X$
such that
$x_{_{j^{\prime}}}Rx_{_1}Rx_{_2}R...Rx_{_N}Rx_{_{j^{\prime\prime}}}$. Therefore, there are $X_1, X_2,...,X_{_N}\in \Xi$
with $x_{_n}\in X_{_n}$ for all $n\in \{1,2,...,N\}$ satisfying
$X^{\ast}_{_{j^{\prime}}}\widetilde{R} X_1 \widetilde{R} X_2 
\widetilde{R}...\widetilde{R} x_{_N} \widetilde{R} X^{\ast}_{_{j^{\prime\prime}}}$.
Therefore, $X^{\ast}_{_{j^{\prime\prime}}}$ cannot be maximal. This contradiction shows that 
$(x_{_{j^{\prime}}}, x_{_{j^{\prime\prime}}})\notin \overline{R}$.
Similarly we can prove that 
$(x_{_{j^{\prime\prime}}},x_{_{j^{\prime}}})\notin \overline{R}$. Hence, $W$ satisfies internal stability for $w$-set.

To prove that $W$ satisfies external stability for $w$-set, let $x\in W$, $y\in X\setminus W$
such that $(y,x)\in \overline{R}$. We have that $x\in X^{\ast}_{j^{\ast}}$ for some $j^{\ast}\in J$.

We have two cases to consider: when (1) $y\in \displaystyle\bigcup_{i\in I}X^{\ast}_{i}\setminus W$ and when 
(2) $y\in X\setminus \displaystyle\bigcup_{i\in I}X^{\ast}_{i}$.

In case (1), 
the only way for $(y,x)\in \overline{R}$ to hold is for $y\in X^{\ast}_{j^{\ast}}$. But then, since $x\in X^{\ast}_{j^{\ast}}$
we have $(x,y)\in \overline{R}$.

In case (2), since for each $i\in I$, $X_i^{\star}\in \mu(\Xi,\widetilde{R})$ we have that
$(y,x)\notin \overline{R}$ for all 
$y\in X\setminus \displaystyle\bigcup_{i\in I}X^{\ast}_{i}$.
Therefore, $W$ satisfies external stability for $w$-set.

Conversely, let $W$ be a $w$-stable set.
By Lemma \ref{a321} we have that $W\neq \emptyset$.
Let $x_{_0}\in W$ and 
let $\{X^{\ast}_i\vert i\in I\}$ be the set of maximal elements in $(\Xi,\widetilde{R})$.
We prove that $x_{_0}\in X^{\ast}_{\mathfrak{i}}$ for some $\mathfrak{i}\in I$.
If $x_{_0}$ is an $\overline{R}$-maximal element, then $\{x_{_0}\}$ belongs to the Schartz set which implies that
$\{x_{_0}\}\in \mu(\Xi,\widetilde{R})$. Otherwise, there exists $y\in X$ such that $y\overline{R}x_{_0}$.
By the external stability for $w$-set we have that $x_{_0}\overline{R}y$.
By the Lemma of Zorn, 
the family of all $R$-cycles
$(\widehat{\mathcal{C}}_{_\delta})_{_{\delta\in \Delta}}$,
$\widehat{\mathcal{C}}_{_\delta}\subseteq X$, which contain $\{x_{_0},y\}$ 
has a maximal element, let  $\widehat{\mathcal{C}}_{_{\delta_{_0}}}$. We prove that
$\widehat{\mathcal{C}}_{_{\delta_{_0}}}\in \mu(\Xi,\widetilde{R})$.
Indeed, let
$X^{\ast}\in \Xi$ such that $X^{\ast}\widetilde{R} \widehat{\mathcal{C}}_{_{\delta_{_0}}}$.
Then, there exists $t\in X^{\ast}$ and $s\in \widehat{\mathcal{C}}_{_{\delta_{_0}}}$ such that $(t,s)\in R$.
It follows that $(s,x_{_0})\in \overline{R}$ with jointly to $(t,s)\in R$ we conclude that $(t,x_{_0})\in \overline{R}$.
By the external stability for $w$-set we conclude that $(x_{_0},t)\in \overline{R}$ which implies that 
$t\in \widehat{\mathcal{C}}_{_{\delta_{_0}}}$ which is impossible. Therefore, 
$\widehat{\mathcal{C}}_{_{\delta_{_0}}}\in \mu(\Xi,\widetilde{R})$ which completes the proof.
\end{proof}

\par\bigskip\smallskip\par\noindent

\par\noindent
{\it Address}: {\tt {Athanasios Andrikopoulos} \\ {Department of Computer Engineering \& Informatics\\ University of Patras\\ Greece}
\par\noindent
{\it E-mail address}:{\tt aandriko@ceid.upatras.gr}

\end{document}